\newcommand{\R}{\mathbb{R}}
\newcommand{\Q}{\mathbb{Q}}
\newcommand{\Z}{\mathbb{Z}}
\newcommand{\N}{\mathbb{N}}
\newcommand{\allone}{\mathbbm{1}}
\newcommand{\conv}{\mathop\mathrm{conv}\nolimits}
\renewcommand{\subset}{\subseteq}
\renewcommand{\supset}{\supseteq}
\newcommand{\isclaim}{\renewcommand{\qedsymbol}{
  $\lozenge$
}}
\newtheorem{theorem}{Theorem}
\newtheorem{corollary}{Corollary}
\newtheorem{claim}{Claim}
\crefname{theorem}{Theorem}{Theorems}
\crefname{lemma}{Lemma}{Lemmas}
\crefname{corollary}{Corollary}{Corollaries}
\crefname{proposition}{Proposition}{Propositions}
\crefname{claim}{Claim}{Claims}
\crefname{equation}{}{}
\crefname{section}{Section}{Sections}
\newlist{myenumerate}{enumerate}{1}
\setlist[myenumerate]{
  label={\itshape(\roman*)},  
  ref={\itshape(\roman*)},    
}
\crefname{myenumeratei}{}{}
\begin{document}

\title{
  On the Complexity of Recognizing Integrality and Total Dual Integrality of the $\{0,1/2\}$-Closure
}

\author{%
  Matthias Brugger%
  \thanks{Operations Research, Department of Mathematics, Technische Universit\"at M\"unchen, Germany. \newline 
  E-mail: \texttt{\{matthias.brugger,andreas.s.schulz\}@tum.de}
  \newline
  Supported by the Alexander von Humboldt Foundation with funds from the German Federal Ministry of Education and Research (BMBF).}
  \and
  Andreas S. Schulz%
  \footnotemark[1]
}

\date{}

\maketitle

\begin{abstract}
The $\{0,\frac{1}{2}\}$-closure of a rational polyhedron $\{ x \colon Ax \le b \}$ is obtained by adding all
Gomory-Chv\'atal cuts that can be derived from the linear system $Ax \le b$ using multipliers in $\{0,\frac{1}{2}\}$.
We show that deciding whether the $\{0,\frac{1}{2}\}$-closure coincides with the integer hull is strongly NP-hard.
A direct consequence of our proof is that, testing whether the linear description of the $\{0,\frac{1}{2}\}$-closure 
derived from $Ax \le b$ is totally dual integral, is strongly NP-hard.
\end{abstract}


\section{Introduction}

Let $P = \{ x \in \R^n \colon Ax \le b \}$ with $A \in \Z^{m \times n}$ and $b \in \Z^m$ be a rational polyhedron.
The integer hull of $P$ is denoted by $P_I = \conv(P \cap \Z^n)$.
Any inequality of the form
$
  u^T Ax \le \left\lfloor u^T b \right\rfloor
$
where $u \in \R^m_{\ge 0}$ and $u^T A \in \Z^n$
is valid for $P_I$. Inequalities of this kind are called \emph{Gomory-Chv\'atal cuts} for $P$ \cite{Gom,Chv}.
The intersection of all halfspaces corresponding to Gomory-Chv\'atal cuts yields the
\emph{Gomory-Chv\'atal closure} $P'$ of $P$. In fact, $[0,1)$-valued multipliers $u$ suffice (see, e.g., \cite{CCZ}), i.e.,
\[
  P' = \{ x \in P \colon u^T Ax \le \left\lfloor u^T b \right\rfloor,\,
  u \in [0,1)^m,\, u^T A \in \Z^n \}.
\]
Caprara and Fischetti \cite{CF} introduced the family of Gomory-Chv\'atal cuts with multipliers $u \in \{0,\frac{1}{2}\}^m$.
We refer to them as \emph{$\{0,\frac{1}{2}\}$-cuts}.
The \emph{$\{0,\frac{1}{2}\}$-closure} of $P$
is defined as
\[
  P_{\frac{1}{2}}(A,b) := 
  \left\{ x \in P \colon 
  u^T Ax \le \lfloor u^T b \rfloor,\, u \in \{0,\tfrac{1}{2}\}^m,\, u^T A \in \Z^n \right\}.
\]
Note that $P_{\frac{1}{2}}(A,b)$ depends on the system $Ax \le b$ defining the polyhedron $P$.
From the definition, it follows that $P_I \subset P' \subset P_\frac{1}{2}(A,b) \subset P$.

$\{0,\frac{1}{2}\}$-cuts are prominent in polyhedral combinatorics;
examples of classes of inequalities that can be derived as $\{0,\frac{1}{2}\}$-cuts include the blossom inequalities 
for the matching polytope \cite{Edm,Chv} and the odd-cycle inequalities for the stable set polytope \cite{GS}.
Both classes of inequalities can be separated in polynomial time \cite{GS,PR}.
In general, though, separation (and, thus, optimization) over the $\{0,\frac{1}{2}\}$-closure of polyhedra is NP-hard:
Caprara and Fischetti \cite{CF} show that the following \emph{membership problem} for the $\{0,\frac{1}{2}\}$-closure is strongly coNP-complete (see also \cite[Theorem 2]{Eis}).
\begin{quote}
  Given $A \in \Z^{m \times n}, b \in \Z^m$ and $\hat{x} \in \Q^n$ such that $\hat{x} \in P := \{ x \in \R^n \colon Ax \le b \}$,
  decide whether $\hat{x} \in P_\frac{1}{2}(A,b)$.
\end{quote}
The membership problem remains strongly coNP-complete even when $Ax \le b$ defines a polytope in the 0/1 cube, 
as shown by Letchford, Pokutta and Schulz \cite{LPS}.
It is, however, well known that testing membership in the Gomory-Chv\'atal closure belongs to NP $\cap$ coNP 
if restricted to polyhedra $P$ with $P' = P_I$ (see, e.g., \cite{BP}),
which naturally includes all polyhedra $P$ whose $\{0,\frac{1}{2}\}$-closure coincides with $P_I$.
For instance, the relaxation of the matching polytope given by nonnegativity and degree constraints has this property:
If we add the blossom inequalities, the resulting linear system is sufficient to describe the integer hull \cite{Edm},
and it is even totally dual integral (TDI) \cite{CM}.
This motivates the following research questions that are the subject of this paper:
What is the computational complexity of recognizing rational polyhedra whose $\{0,\frac{1}{2}\}$-closure coincides with the integer hull,
and of deciding whether adding all $\{0,\frac{1}{2}\}$-cuts produces a TDI system? 

Related questions for the Gomory-Chv\'atal closure have been studied by Cornu\'ejols and Li \cite{CL}.
They prove that, given a rational polyhedron $P$ with $P_I = \emptyset$, deciding whether $P' = \emptyset$ is weakly NP-complete.
This immediately implies weak NP-hardness of verifying $P'=P_I$.
Cornu\'ejols, Lee and Li \cite{CLL} extend these hardness results to the case when $P$ is contained in the 0/1 cube.
Moreover, they show that deciding whether a constant number of Gomory-Chv\'atal inequalities is sufficient to obtain the 
integer hull is weakly NP-hard, even for polytopes in the 0/1 cube.
In this paper, we establish analogous hardness results for the $\{0,\frac{1}{2}\}$-closure.
Our main result is the following theorem, where $\allone$ denotes the all-one vector.
\begin{theorem} \label{thm:main}
Given $A \in \Z^{m \times n}$ and $b \in \Z^m$ with $P := \{ x \in \R^n \colon Ax \le b \} \subset [0,1]^n$,
deciding whether $P_{\frac{1}{2}}(A,b) = P_I$ is strongly NP-hard, 
even when the inequalities $-x \le 0$ and $x \le \allone$ are part of the system $Ax \le b$.
\end{theorem}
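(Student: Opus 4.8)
The plan is to reduce from the \emph{complement} of the $\{0,\frac{1}{2}\}$-closure membership problem, which by Letchford, Pokutta and Schulz~\cite{LPS} (strengthening Caprara and Fischetti~\cite{CF}) is strongly NP-complete already for systems $Ax\le b$ defining a polytope $P\subset[0,1]^n$ in which $-x\le0$ and $x\le\allone$ occur. An instance is a triple $(A,b,\hat x)$ with $\hat x\in P$, and one must decide whether some valid $\{0,\frac{1}{2}\}$-cut separates $\hat x$; equivalently, writing $s_i:=b_i-a_i^T\hat x\ge0$ for the slacks, whether there is a set $S$ of rows with $\sum_{i\in S}a_i\equiv0$ and $\sum_{i\in S}b_i\equiv1\pmod 2$ and $\sum_{i\in S}s_i<1$. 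From such an instance I would construct, in polynomial time and with polynomially bounded integer data, a system $A'x'\le b'$ whose polyhedron $Q\subset[0,1]^{n'}$ (still containing the box constraints) satisfies
\[
  Q_{\frac{1}{2}}(A',b')=Q_I \quad\Longleftrightarrow\quad \hat x\notin P_{\frac{1}{2}}(A,b),
\]
and is moreover such that, in the affirmative case, the description of $Q_{\frac{1}{2}}(A',b')$ obtained from $A'x'\le b'$ is TDI. Since the membership problem is strongly coNP-complete, this yields strong NP-hardness of deciding $Q_{\frac{1}{2}}(A',b')=Q_I$; and since a TDI system with integral right-hand side defines an integral polyhedron, in the negative case the (integral-data) description of $Q_{\frac{1}{2}}(A',b')$ cannot be TDI, so the very same reduction gives strong NP-hardness of recognizing total dual integrality of the $\{0,\frac{1}{2}\}$-closure description --- this is the ``direct consequence'' announced in the abstract.

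For the construction I would first put the instance into a normal form: make $\hat x$ a vertex of $P$, and arrange that $P\setminus P_I$ is a single simplex whose unique fractional vertex is $\hat x$, so that $P_{\frac{1}{2}}(A,b)=P_I$ can fail \emph{only} at $\hat x$. Then I would attach a gadget --- a bounded number of auxiliary variables and inequalities with small (say $\{0,\pm1,\pm2\}$) coefficients --- designed to (i) pin $Q_I$ down to an explicitly described integral polytope that admits a TDI description, and (ii) rigidify the coefficient parities so that the only $\{0,\frac{1}{2}\}$-multipliers of $A'x'\le b'$ producing a cut not already valid for $Q_I$ are those supported on rows of the original $A$, with the cut from a set $S$ slicing off the simplex at $\hat x$ exactly when the original $\{0,\frac{1}{2}\}$-cut from $S$ separates $\hat x$. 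The intuition is that a $\{0,\frac{1}{2}\}$-cut is ``active'' only when the chosen rows sum to an even vector with an odd right-hand side, so controlling which cuts can arise is a matter of controlling parities; in particular the box constraints $-x'\le0$, $x'\le\allone$ only ever contribute even row sums with even right-hand sides and hence cannot create new active cuts --- which is why they are harmless and the strengthened form of the theorem holds.

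The main obstacle is step (ii): one must give a \emph{complete} description of the $\{0,\frac{1}{2}\}$-closure of the gadget polyhedron $Q$, i.e.\ prove that no combination of gadget inequalities among themselves, and no combination of gadget inequalities with rows of $A$, ever produces an unforeseen cut, and that the cuts actually arising from subsets $S$ act on the simplex at $\hat x$ precisely as designed. This is delicate because ``$P_{\frac{1}{2}}(A,b)=P_I$'' is a global statement --- it is not enough that $\hat x$ be cut off, every point of $P\setminus P_I$ must be --- so the normal form has to genuinely confine all possible obstructions to a neighbourhood of $\hat x$, and the gadget must not reintroduce obstructions elsewhere. Finally I would check that all coefficients and denominators remain polynomially bounded (true for the instances of~\cite{LPS} and for a constant-coefficient gadget), which is what upgrades the hardness from weak to strong, in contrast with the weakly NP-hard analogues known for the Gomory-Chv\'atal closure~\cite{CL,CLL}.
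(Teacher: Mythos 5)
Your proposal is a plan rather than a proof: the two constructions on which everything rests --- the ``normal form'' that confines all of $P\setminus P_I$ to a single simplex at $\hat x$ while preserving the answer to the membership question, and the gadget of point (ii) --- are never specified, and you yourself identify (ii) as ``the main obstacle.'' Note that any such normal-form transformation modifies the linear system, and the $\{0,\frac{1}{2}\}$-closure depends on the system, not just on the polyhedron it defines; so even the preprocessing step is a substantial unproved claim. There is also a concrete error in the one place where you do argue something: the assertion that the box constraints ``only ever contribute even row sums with even right-hand sides and hence cannot create new active cuts'' is false. The row $x_i\le 1$ contributes the non-even row vector $e_i$ with odd right-hand side $1$, and $-x_i\le 0$ contributes $-e_i$; including such rows in a multiplier set flips the parity of one coordinate of the row sum and can be compensated by other rows, so box constraints can very well participate in nontrivial cuts. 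Indeed, in the paper's reduction the multipliers $v$ on the upper bounds must be dealt with explicitly, and ruling out that cuts with $v\ne 0$ yield the target inequality requires genuine arguments (minimum degree at least $2$, $m\ge 3$; see \cref{claim:char,claim:char_bottom}). Finally, your starting point is shakier than you suggest: the strengthening of the Letchford--Pokutta--Schulz membership hardness to systems that contain $-x\le 0$ and $x\le\allone$ is \emph{not} in \cite{LPS} --- the paper points this out and obtains that strengthening only as a corollary of the very theorem you are trying to prove, so you cannot assume it.

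For comparison, the paper avoids all of this by reducing directly from \textsc{Stable Set}: it builds an explicit polytope $P\subset[0,1]^m$ from the edge--node incidence matrix (with $A=2\allone\allone^T-M^T$, right-hand side $2\allone$, and one extra cardinality inequality), shows $P_I=\{x\in P\colon \allone^Tx=1\}$, and then completely classifies which $\{0,\frac{1}{2}\}$-cuts can be equivalent to $\allone^Tx\le 1$; the answer is exactly the multiplier vectors $2u$ that are incidence vectors of stable sets of size at least $k$. The ``complete description of the closure'' that you flag as the hard part is thus achieved not by a gadget bolted onto an arbitrary instance but by choosing the instance so that the classification is tractable. If you want to salvage your approach, you would need to actually exhibit the normal form and the gadget and carry out the full case analysis of all multiplier choices --- at which point you would essentially be rebuilding a construction of the paper's type.
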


We give a proof of this theorem in the next section.
Our proof implies several further hardness results, which we explain in \cref{sec:corollaries}. 
In particular, deciding whether adding all $\{0,\frac{1}{2}\}$-cuts to a given linear system $Ax \le b$ produces a TDI system,
is strongly NP-hard.
We also establish strong NP-hardness of the following problems:
deciding whether the $\{0,\frac{1}{2}\}$-closure coincides with the Gomory-Chv\'atal closure; 
deciding whether a constant number of $\{0,\frac{1}{2}\}$-cuts suffices to obtain the integer hull.
Finally, we give a hardness result for the membership problem for the $\{0,\frac{1}{2}\}$-closure, 
which is slightly stronger than the one of Letchford, Pokutta and Schulz \cite{LPS}. 


\section{Proof of Theorem 1} \label{sec:proof}

\begin{proof}[Proof of \cref{thm:main}]
We reduce from \textsc{Stable Set}:
\begin{quote}
  Let $G=(V,E)$ be a graph
  and $k \in \N, k \ge 2$. Does $G$ have a stable set of size at least $k$?
\end{quote}
It is well known that \textsc{Stable Set} is strongly NP-hard \cite{Kar}. 
Note that the problem remains strongly NP-hard if restricted to graphs with minimum degree at least 2:
Given an instance of \textsc{Stable Set} specified by $G$ and $k$, we construct a new graph $G'$ by adding two dummy nodes
to $G$ as well as all edges with at least one endpoint being a dummy node. Every node in $G'$ has degree at least 2, 
and every stable set in $G'$ of size $k \ge 2$ is a stable set in $G$ of the same size.

Consider an instance of \textsc{Stable Set} given by $G=(V,E)$ and $k \ge 2$.
By the above observation, we may assume that every node in $V$ has degree at least 2.
Note that $|V|=:n \ge 3$ and $|E|=:m \ge 3$ in this case.
Let $A := 2 \cdot \allone \allone^T - M^T$ where
$M \in \{0,1\}^{m \times n}$ denotes the edge-node incidence matrix of $G$ and $\allone$ is the all-one vector of appropriate dimension.
We define a polytope $P \subset \R^m$ by the following system of inequalities:
\begin{align}
  0 \;\le\; x &\;\le\; \allone      \label{eq:P_1} \\
  Ax &\;\le\; 2 \cdot \allone       \label{eq:P_A} \\
  (2k-3) \allone^T x &\;\ge\; 2k-3  \label{eq:P_bottom}
\end{align}

\begin{claim} \label{claim:integral}
$P_I = \{ x \in P \colon \allone^T x = 1 \}$.
\end{claim}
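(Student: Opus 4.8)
The plan is to compute the integer points of $P$ explicitly and then to recognize $P_I = \conv(P \cap \Z^m)$ as the face of $P$ cut out by $\allone^T x = 1$.

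First I would pin down $P \cap \Z^m$. By \eqref{eq:P_1}, every integer point $x$ of $P$ is the characteristic vector $\chi^S$ of some edge set $S \subseteq E$, and \eqref{eq:P_bottom} forces $|S| \ge 1$, since $k \ge 2$ gives $2k - 3 \ge 1$. The row of $A$ indexed by a node $v$ satisfies $(Ax)_v = 2\,\allone^T x - \sum_{e \ni v} x_e$, so for $x = \chi^S$ the inequality \eqref{eq:P_A} becomes $d_S(v) \ge 2(|S| - 1)$, where $d_S(v)$ denotes the degree of $v$ in the subgraph $(V, S)$. Summing over all $v \in V$ and using the handshake identity $\sum_{v \in V} d_S(v) = 2|S|$ gives $2|S| \ge 2n(|S| - 1)$, i.e.\ $(n - 1)|S| \le n$; as $n \ge 3$, this forces $|S| \le 1$, and hence $|S| = 1$. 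Conversely, every unit vector $e_i$ plainly satisfies \eqref{eq:P_1}, \eqref{eq:P_A} and \eqref{eq:P_bottom}, so $P \cap \Z^m = \{e_1, \dots, e_m\}$ and therefore $P_I = \conv\{e_1, \dots, e_m\}$ equals the standard simplex $\Delta := \{ x \in \R^m \colon x \ge 0,\; \allone^T x = 1 \}$.

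It then remains to show $\Delta = \{ x \in P \colon \allone^T x = 1 \}$. For $\Delta \subseteq \{x \in P \colon \allone^T x = 1\}$: each $e_i$ lies in $P$ and $P$ is convex, so $\Delta \subseteq P$, while every point of $\Delta$ satisfies $\allone^T x = 1$ by definition. For the reverse inclusion: any $x \in P$ has $x \ge 0$ by \eqref{eq:P_1}, so an $x \in P$ with $\allone^T x = 1$ automatically lies in $\Delta$. Together with the previous step this yields $P_I = \{ x \in P \colon \allone^T x = 1 \}$.

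I do not expect a genuine obstacle. The step that requires the most care is translating \eqref{eq:P_A} into the degree bound $d_S(v) \ge 2(|S| - 1)$ and running the counting argument on it; the key point is that the reduction guarantees $n \ge 3$ (by first passing to graphs of minimum degree at least $2$), which is exactly what is needed to force $S$ down to a single edge.
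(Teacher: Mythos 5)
Your proof is correct and follows essentially the same route as the paper: summing the $n$ node inequalities of \cref{eq:P_A} (you do this via the handshake identity applied to integer points, the paper does it directly on the inequalities) yields $(n-1)\,\allone^T x \le n$, which together with $n \ge 3$ and \cref{eq:P_bottom} forces $\allone^T x = 1$ for integral points, and the converse is the easy check that every unit vector lies in $P$. Your version is a bit more explicit about identifying $P \cap \Z^m$ with the unit vectors, but the argument is the same.
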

\begin{proof}[Proof of \cref{claim:integral}] \isclaim
If we add all inequalities in \cref{eq:P_A}, we obtain the valid inequality $2(n-1) \allone^T x \le 2n$.
Every integral point $x$ in $P$ therefore satisfies $\allone^T x = 1$. 
Since $A \in \{1,2\}^{n \times m}$, it is easy to check that every unit vector is indeed contained in $P$.
We conclude that
\[
  P_I = \{ x \in [0,1]^m \colon \allone^T x = 1 \} \supset \{ x \in P \colon \allone^T x = 1 \} \supset P_I.  \qedhere
\]
\end{proof}

The $\{0,\frac{1}{2}\}$-cuts that can be derived from \cref{eq:P_1,eq:P_A,eq:P_bottom} 
are all the inequalities of the following two types with $u \in \{0,\frac{1}{2}\}^n$ and $v \in \{0,\frac{1}{2}\}^m$:
\begin{align} 
  \sum_{i=1}^m \left( 2u^T \allone + \left\lfloor v_i - (Mu)_i \right\rfloor \right) x_i 
    &\le 2u^T \allone + \left\lfloor v^T \allone \right\rfloor
    \label{eq:cut} \\
  \sum_{i=1}^m \left( 2u^T \allone - (k-1) + \left\lfloor \tfrac{1}{2} + v_i - (Mu)_i \right\rfloor \right) x_i 
    &\le 2u^T \allone - (k-1) + \left\lfloor \tfrac{1}{2} + v^T \allone \right\rfloor
    \label{eq:cut_bottom}
\end{align}
The first type \cref{eq:cut} defines all cuts that are derived only from \cref{eq:P_1,eq:P_A},
whereas the second type \cref{eq:cut_bottom} also uses inequality \cref{eq:P_bottom}.
The vector $u$ is the vector of multipliers for inequalities \cref{eq:P_A} while $v$ collects the multipliers for the upper bounds in \cref{eq:P_1}.

In what follows, $P_\frac{1}{2}$ denotes the $\{0,\frac{1}{2}\}$-closure of $P$ defined by \cref{eq:P_A,eq:P_1,eq:P_bottom}
together with \cref{eq:cut,eq:cut_bottom} for all $u \in \{0,\frac{1}{2}\}^n$ and $v \in \{0,\frac{1}{2}\}^m$.

\begin{claim} \label{claim:iff}
$P_{\frac{1}{2}} = P_I$ if and only if there is a $\{0,\frac{1}{2}\}$-cut equivalent to $\allone^T x \le 1$.
\end{claim}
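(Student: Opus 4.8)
The plan is to prove both directions using Claim 1, which tells us that $P_I = \{x \in P : \allone^T x = 1\}$. For the "if" direction, suppose some $\{0,\frac12\}$-cut is equivalent to $\allone^T x \le 1$. Since $P \subset [0,1]^m$ already forces $x \ge 0$, hence $\allone^T x \ge 0$, and since every unit vector lies in $P$ (as noted in the proof of Claim 1), the inequality $\allone^T x \le 1$ is the only facet separating $P$ from $P_I$; more precisely, $P \cap \{\allone^T x \le 1\} = \{x \in P : \allone^T x = 1\} \cup \{\text{points with } 0 \le \allone^T x < 1\}$, but combined with $(2k-3)\allone^T x \ge 2k-3$ from \cref{eq:P_bottom} (and $k \ge 2$, so $2k-3 \ge 1 > 0$), we get $\allone^T x \ge 1$ on all of $P$. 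Hence $P \cap \{\allone^T x \le 1\} = \{x \in P : \allone^T x = 1\} = P_I$. Since this cut is one of the $\{0,\frac12\}$-cuts defining $P_\frac12$, we get $P_\frac12 \subseteq P_I$, and the reverse inclusion $P_I \subseteq P_\frac12$ always holds, so $P_\frac12 = P_I$.

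For the "only if" direction, suppose $P_\frac12 = P_I$. Then $P_\frac12 \subseteq \{x : \allone^T x \le 1\}$. The key point is that $P_\frac12$ is defined by finitely many inequalities — namely \cref{eq:P_1,eq:P_A,eq:P_bottom} together with the cuts \cref{eq:cut,eq:cut_bottom} — so we want to argue that one of these inequalities must already "do the job" of cutting off the region $\{\allone^T x > 1\}$. The natural approach is to pick a point $\hat x \in P \setminus P_I$ with $\allone^T \hat x$ as large as possible, or more robustly, to analyze the vertex of $P$ maximizing $\allone^T x$: by the aggregation argument in Claim 1, $\max\{\allone^T x : x \in P\} = \frac{n}{n-1}$, attained (for instance) at $\frac{1}{n-1}\allone$. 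Some defining inequality of $P_\frac12$ must be violated by, or tight in a separating way at, points just outside $P_I$; since none of \cref{eq:P_1,eq:P_A,eq:P_bottom} separates any point of $P$ (they define $P$ itself), the separation must come from a cut \cref{eq:cut} or \cref{eq:cut_bottom}. I would then show that any such cut valid for $P_I = \{x \in [0,1]^m : \allone^T x = 1\}$ and separating some point of $P$ must in fact be equivalent to $\allone^T x \le 1$: a $\{0,\frac12\}$-cut has the form $c^T x \le \delta$ with integer data, it is valid for all unit vectors (giving $c_i \le \delta$ for all $i$), and to separate a point of $P$ near $\frac{1}{n-1}\allone$ while being implied on $P_I$ it must have all $c_i$ equal and $\delta$ equal to that common value — forcing, after scaling, exactly $\allone^T x \le 1$.

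The main obstacle I anticipate is the "only if" direction: ruling out the possibility that $P_\frac12 = P_I$ is achieved by the *combined* effect of several cuts, none of which individually equals $\allone^T x \le 1$. Handling this cleanly requires a sharper geometric argument — for example, exhibiting a specific point $\hat x \in P$ with $\allone^T \hat x > 1$ that is cut off *only* by a cut equivalent to $\allone^T x \le 1$, i.e. showing that every cut of the form \cref{eq:cut} or \cref{eq:cut_bottom} that is not equivalent to $\allone^T x \le 1$ is satisfied by $\hat x$. A good candidate is a point of the form $\hat x = \frac{1}{n-1}\allone$ or a small perturbation thereof chosen so that the "rounding" in \cref{eq:cut} and \cref{eq:cut_bottom} gains nothing unless $u = 0$ and $v$ is chosen to produce the all-ones coefficient vector with right-hand side $1$. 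I would carry out the case analysis on $u$ and $v$: when $u \ne 0$, the coefficients $2u^T\allone + \lfloor v_i - (Mu)_i\rfloor$ tend to be large (since $u^T\allone \ge \frac12$ and each node has degree $\ge 2$), making the cut weak near $\hat x$; when $u = 0$, the cut \cref{eq:cut} reduces to $\sum_i \lfloor v_i \rfloor x_i \le \lfloor v^T\allone\rfloor$, i.e. $\sum_{i \in S} x_i \le \lfloor |S|/2 \rfloor$ for $S = \{i : v_i = \frac12\}$ — wait, $\lfloor v_i \rfloor = 0$, so this is $0 \le \lfloor |S|/2\rfloor$, which is vacuous, so in fact the nontrivial first-type cuts need $u \ne 0$; and similarly \cref{eq:cut_bottom} with $u = 0$ gives $\sum_i (1-k + \lfloor \frac12 + v_i\rfloor) x_i \le 1-k+\lfloor \frac12 + v^T\allone\rfloor$, which for $v = \frac12\allone$ becomes $\sum_i(2-k)x_i \le \lceil m/2\rceil + 1 - k$, and for $v = 0$ becomes $(1-k)\allone^T x \le 1-k$, i.e. exactly $\allone^T x \ge 1$ — not the cut we want. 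The upshot is that the argument will hinge on identifying that the *only* way to produce the coefficient vector $\allone$ with right-hand side $1$ is a specific choice of $(u,v)$, and then verifying this is forced; I would organize this as the heart of the proof and treat the remaining $(u,v)$ cases by the "large coefficients / weak cut" estimate.
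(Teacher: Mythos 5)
Your ``if'' direction is correct and is essentially the paper's: a cut equivalent to $\allone^T x \le 1$, combined with \cref{eq:P_bottom}, pins $P_{\frac{1}{2}}$ down to $\{x \in P : \allone^T x = 1\} = P_I$. The ``only if'' direction is where the content lies, and there your proposal has a genuine gap. You correctly identify the right strategy --- exhibit a point of $P \setminus P_I$ that satisfies every cut \emph{not} equivalent to $\allone^T x \le 1$ --- but you neither carry out the verification (you defer it to a case analysis on $(u,v)$ supported only by the heuristic that coefficients ``tend to be large'') nor choose a workable witness. The candidate $\hat x = \tfrac{1}{n-1}\allone$ is in general not even in $P$: the variables are indexed by the $m$ edges, so $\allone^T \hat x = \tfrac{m}{n-1}$, which already violates the aggregated inequality $\allone^T x \le \tfrac{n}{n-1}$ whenever $m>n$ (for $G=K_4$, the point $\tfrac{1}{3}\allone$ gives value $3>2$ in every row of $Ax \le 2\cdot\allone$); you appear to have conflated $n=|V|$ with the ambient dimension $m=|E|$. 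Moreover, placing the witness near the maximizer of $\allone^T x$ over $P$ is the wrong instinct: a cut that is not equivalent to $\allone^T x \le 1$ may still cut deeply into $P$, so you cannot conclude it misses your point merely because its coefficients are large.

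The paper instead takes $y = (\tfrac{1}{m}+\varepsilon)\allone$, sitting just barely above the hyperplane $\allone^T x = 1$, and replaces the case analysis by one uniform estimate: every inequality among \cref{eq:P_A,eq:P_bottom,eq:cut,eq:cut_bottom} can be written as $a^T x \le \alpha$ with $a\in\Z^m$, $a_i \le \alpha$ for all $i$, and $\alpha \le m+n$. If it is not equivalent to $\allone^T x \le 1$, then either $\alpha \le 0$ (so $a\le 0$ componentwise and $a^T y \le \alpha\,\allone^T y \le \alpha$), or $a_i \le \alpha-1$ for some $i$, whence $a^T y \le (\tfrac{1}{m}+\varepsilon)(m\alpha-1) \le \alpha$ once $\varepsilon = \tfrac{1}{m^2(m+n)}$. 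This single computation is exactly the step missing from your sketch; without it (or an equivalent substitute) the ``only if'' direction is unproven. Note also that the fine-grained classification of which $(u,v)$ \emph{do} yield a cut equivalent to $\allone^T x \le 1$ is not needed for this claim --- that is the content of \cref{claim:char,claim:char_bottom}.
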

\begin{proof}[Proof of \cref{claim:iff}] \isclaim
If there is such a cut, then $P_{\frac{1}{2}} \subset \{ x \in P \colon \allone^T x \le 1 \} = P_I$ by \cref{claim:integral}.
To see the ``only if'' part, consider the vector $y = (\frac{1}{n} + \varepsilon) \allone$ 
for some small $\varepsilon>0$. Clearly, $y \notin P_I$ since $\allone^T y > 1$. 
We claim that there is a choice for $\varepsilon$ such that $y \in P$ and
$y$ satisfies all $\{0,\frac{1}{2}\}$-cuts except those that are equivalent to $\allone^T x \le 1$.
First observe that every cut (of either type \cref{eq:cut} or \cref{eq:cut_bottom}) as well as every inequality in \cref{eq:P_bottom,eq:P_A}
may be written as $a^T x \le \alpha$ for some $a \in \Z^m, \alpha \in \Z$ 
where $a_i \le \alpha$ for all $i \in [m]$ and $\alpha \le m+n$.
If $\alpha \le 0$, we clearly have $a^T y \le \alpha$ since $y \ge \tfrac{1}{m} \allone$.
If $\alpha > 0$ and $a^T x \le \alpha$ is not equivalent to $\allone^T x \le 1$, 
then $a_i < \alpha$ for at least one $i \in [m]$.
It follows that
$
  a^T y \le \alpha - \tfrac{1}{m} + \varepsilon (m \alpha -1).
$
For instance, taking $\varepsilon := \frac{1}{m^2 (m+n)}$ 
yields $a^T y \le \alpha$ as desired.
\end{proof}

In particular, the proof of \cref{claim:iff} shows that the inequality $\allone^T x \le 1$ is not valid for $P$.

\begin{claim} \label{claim:char}
No cut of type \cref{eq:cut} is equivalent to $\allone^T x \le 1$.
\end{claim}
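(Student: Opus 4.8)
The plan is to examine the coefficients and the right-hand side of an arbitrary cut \cref{eq:cut} directly. Recall (from the argument in the proof of \cref{claim:iff}) that such a cut is equivalent to $\allone^T x \le 1$ precisely when its $m$ coefficients are all equal to one another and to its right-hand side, with this common value positive. Fix $u \in \{0,\tfrac{1}{2}\}^n$ and $v \in \{0,\tfrac{1}{2}\}^m$, put $s := u^T\allone$, and set $S := \{ j \in V \colon u_j = \tfrac{1}{2} \}$, so that $2s = |S| \in \Z_{\ge 0}$. The key computation is that for each edge $i$ with endpoints $j_1, j_2$ we have $(Mu)_i = u_{j_1}+u_{j_2} \in \{0,\tfrac{1}{2},1\}$, hence $v_i - (Mu)_i \in \{-1,-\tfrac{1}{2},0,\tfrac{1}{2}\}$, so $\lfloor v_i-(Mu)_i\rfloor \in \{-1,0\}$; more precisely this floor equals $0$ if neither $j_1$ nor $j_2$ lies in $S$, it equals $-1$ if both $j_1$ and $j_2$ lie in $S$, and if exactly one of them lies in $S$ it equals $0$ when $v_i=\tfrac{1}{2}$ and $-1$ when $v_i=0$. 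Since the coefficient of $x_i$ in \cref{eq:cut} is $2s + \lfloor v_i-(Mu)_i\rfloor$ and the summand $2s$ does not depend on $i$, all coefficients of the cut are equal only if all these floors are equal, hence only if they are all $-1$ or all $0$; I would then split into these two cases.

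If all the floors equal $-1$, the common coefficient of the cut is $2s-1$, so equivalence to $\allone^T x \le 1$ would require the right-hand side $2s + \lfloor v^T\allone\rfloor$ to equal $2s-1$, i.e.\ $\lfloor v^T\allone\rfloor = -1$; this is impossible because $v \ge 0$ forces $v^T\allone \ge 0$. If all the floors equal $0$, then in particular no edge has both endpoints in $S$, i.e.\ $S$ is a stable set of $G$. In the subcase $S=\emptyset$ every coefficient of the cut is $2s = 0$, so the cut is certainly not equivalent to $\allone^T x \le 1$. In the subcase $S\neq\emptyset$, choose any $j\in S$: by the minimum-degree assumption $j$ is incident to at least two edges, and since $S$ is stable each of them has exactly one endpoint, namely $j$, in $S$, so the floor being $0$ forces $v_i=\tfrac{1}{2}$ for each of these (at least two) edges. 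Consequently $v^T\allone \ge 1$, and the right-hand side $2s + \lfloor v^T\allone\rfloor \ge 2s+1$ strictly exceeds the common coefficient $2s$; hence the cut is not equivalent to $\allone^T x \le 1$ here either.

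The only genuinely delicate point is the last subcase, and it is precisely where the preprocessing step of the reduction is used: if $G$ had a node $j$ of degree $1$, then taking $S=\{j\}$, setting $v_i = \tfrac{1}{2}$ for the unique edge $i$ at $j$ and $v_{i'}=0$ otherwise would make all the floors $0$ with $v^T\allone=\tfrac{1}{2}$, producing exactly the cut $\allone^T x \le 1$. The assumption that every node has degree at least $2$ rules this out, which is the content of the claim; everything else is the routine evaluation of the floors recorded above.
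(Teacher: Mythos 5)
Your proof is correct and rests on the same key facts as the paper's: the floors $\lfloor v_i - (Mu)_i\rfloor$ all lie in $\{-1,0\}$, and the minimum-degree-$2$ assumption rules out the one configuration in which every coefficient could equal a positive right-hand side. You organize the cases by the common value of these floors rather than by whether $u$ or $v$ vanishes (the paper dispatches $u=0$ and $v=0$ via redundancy arguments and then argues by contradiction from $Mu \le v$), but the substance is the same.
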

\begin{proof}[Proof of \cref{claim:char}] \isclaim
Let $u \in \{0,\frac{1}{2}\}^n$ and $v \in \{0,\frac{1}{2}\}^m$.
If $u=0$, \cref{eq:cut} is dominated by the sum of the inequalities $\left\lfloor v_i - (Mu)_i \right\rfloor x_i \le 0$ for all $i \in [m]$. Note that these are valid for $P$ since $\left\lfloor v_i - (Mu)_i \right\rfloor \le 0$ for all $i \in [m]$.
If $v=0$, the cut \cref{eq:cut} is a trivial cut which is only derived from inequalities in the description of $P$ with even right-hand sides. Hence, we may assume that both $u \ne 0$ and $v \ne 0$.
It suffices to show that $\left\lfloor v_i - (Mu)_i \right\rfloor < \left\lfloor v^T \allone \right\rfloor$ for at least one $i \in [m]$. If $v^T \allone \ge 1$, there is nothing to show.
Now let $v^T \allone = \frac{1}{2}$ and suppose for the sake of contradiction that 
$\left\lfloor v_i - (Mu)_i \right\rfloor \ge 0$ for all $i \in [m]$. It follows that $Mu \le v$.
Since every column of $M$ has at least two nonzero entries by assumption, we obtain $u=0$, a contradiction.
\end{proof}

\begin{claim} \label{claim:char_bottom}
A cut of type \cref{eq:cut_bottom} induced by $u \in \{0,\frac{1}{2}\}^n$ and $v \in \{0,\frac{1}{2}\}^m$
is equivalent to $\allone^T x \le 1$ if and only if $v = 0$, $2Mu \le \allone$, and $2u^T \allone \ge k$.
\end{claim}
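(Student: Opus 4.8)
We need to characterize when a cut of type \cref{eq:cut_bottom} is equivalent to $\allone^T x \le 1$. The plan is to unwind the definition of equivalence: two inequalities $a^T x \le \alpha$ and $a'^T x \le \alpha'$ are equivalent exactly when they are positive scalar multiples of each other, and since both our cut and $\allone^T x \le 1$ have integer coefficients and the target has all coefficients equal to $1$ with right-hand side $1$, the cut is equivalent to $\allone^T x \le 1$ if and only if the right-hand side $2u^T\allone - (k-1) + \lfloor \tfrac12 + v^T\allone\rfloor$ equals $1$ \emph{and} every coefficient $2u^T\allone - (k-1) + \lfloor \tfrac12 + v_i - (Mu)_i\rfloor$ equals $1$.

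**Carrying out the algebra.**

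Set $s := 2u^T\allone$, which is a nonnegative integer since $u \in \{0,\tfrac12\}^n$; similarly $t := v^T\allone \in \tfrac12\Z_{\ge 0}$ so $\lfloor \tfrac12 + t\rfloor$ is the nearest-integer rounding of $t$. The right-hand side condition reads $s - (k-1) + \lfloor\tfrac12 + t\rfloor = 1$, i.e.\ $s + \lfloor\tfrac12+t\rfloor = k$. For each $i$, writing $d_i := (Mu)_i \in \tfrac12\Z_{\ge0}$, the coefficient condition reads $s - (k-1) + \lfloor\tfrac12 + v_i - d_i\rfloor = 1$, i.e.\ $\lfloor\tfrac12 + v_i - d_i\rfloor = \lfloor\tfrac12 + t\rfloor$ for every $i$. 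I would first argue that $v = 0$: if some $v_i = \tfrac12$ then $t \ge \tfrac12$ so $\lfloor\tfrac12+t\rfloor \ge 1$, while $\lfloor\tfrac12 + v_j - d_j\rfloor \le \lfloor\tfrac12 + v_j\rfloor \le 1$ with equality only when $d_j = 0$; pushing this, since every column of $M$ has at least two nonzero entries and $u \ne 0$ would be forced (as $s \ge k \ge 2 > 0$ from the RHS condition once $v=0$ fails to hold — more carefully, one shows the coefficient equalities together with $v \ne 0$ force $Mu$ to be small enough to contradict the minimum-degree-$2$ assumption exactly as in the proof of \cref{claim:char}). Hence $v = 0$, and then $t = 0$, $\lfloor\tfrac12+t\rfloor = 0$, the RHS condition becomes $s = k$, i.e.\ $2u^T\allone = k$, and the coefficient conditions become $\lfloor\tfrac12 - d_i\rfloor = 0$ for all $i$, i.e.\ $d_i \le \tfrac12$, i.e.\ $2(Mu)_i \le 1$ for all $i$, i.e.\ $2Mu \le \allone$.

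**The one subtlety.**

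Strictly, the claim states $2u^T\allone \ge k$ rather than $= k$; the point is that once $v=0$ and $2Mu\le\allone$, the coefficient/RHS equalities pin down the cut to be $(s-k+1)\allone^T x \le s - k + 1$, which is equivalent to $\allone^T x \le 1$ precisely when $s - k + 1 \ge 1$, i.e.\ $s \ge k$ (if $s - k + 1 = 0$ the "cut" is the trivial $0 \le 0$, and if $s < k$ it is $(\text{negative})\allone^T x \le (\text{negative})$, equivalent to $\allone^T x \ge 1$, not $\le$). So I would phrase the converse direction as: given $v=0$, $2Mu\le\allone$, and $2u^T\allone \ge k$, the induced cut \cref{eq:cut_bottom} simplifies to $(2u^T\allone - k + 1)(\allone^T x - 1) \le 0$ with strictly positive leading factor, hence is equivalent to $\allone^T x \le 1$.

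**Main obstacle.**

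The genuinely delicate step is the forward direction's claim that $v \ne 0$ is impossible: one must rule out the possibility that a nonzero $v$ is "compensated" by $Mu$ so that all the floor-expressions still agree. This is where the minimum-degree-$2$ hypothesis (equivalently, every column of $M$ has $\ge 2$ ones), already exploited in \cref{claim:char}, does the work — I would isolate the indices $i$ with $v_i = \tfrac12$, note the coefficient equality forces $d_i = (Mu)_i \ge \tfrac12$ at each such $i$ while at indices with $v_i = 0$ it forces $(Mu)_i \le \tfrac12$, and then a counting argument on the support of $u$ against column sums of $M$ yields the contradiction, mirroring the final paragraph of the proof of \cref{claim:char}.
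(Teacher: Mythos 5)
Your overall skeleton is sound: reduce ``equivalent to $\allone^T x \le 1$'' to ``all coefficients equal the right-hand side and the right-hand side is positive,'' settle the $v=0$ case by direct computation, and observe that allowing positive scalar multiples is exactly why the condition is $2u^T\allone \ge k$ rather than $=k$. Your treatment of the $v=0$ case and of that scaling subtlety is correct; in the subcase $2u^T\allone\le k-1$ your argument (the right-hand side $2u^T\allone-(k-1)$ is nonpositive, so the cut cannot be a positive multiple of $\allone^Tx\le 1$) is even a little more direct than the paper's, which instead shows the cut is implied by inequalities valid for $P$ and uses that $\allone^Tx\le 1$ is not valid for $P$.

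The gap is precisely where you flag it, in excluding $v\ne 0$, and the ``more careful'' version you sketch there is wrong as stated. If $v\ne 0$ then $\lfloor\tfrac12+v^T\allone\rfloor\ge 1$, so the coefficient equalities force $\lfloor\tfrac12+v_i-(Mu)_i\rfloor\ge 1$ for \emph{every} $i\in[m]$, which forces $v_i=\tfrac12$ \emph{and} $(Mu)_i=0$ for every $i$. This is the opposite of what you wrote: ``$(Mu)_i\ge\tfrac12$ at indices with $v_i=\tfrac12$, and $(Mu)_i\le\tfrac12$ at indices with $v_i=0$'' are the bounds you would get if the common value of the floor terms were $0$, i.e.\ in the $v=0$ regime, and they yield no contradiction at all (take $u$ supported on one vertex whose incident edges are exactly the indices with $v_i=\tfrac12$). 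Once you have $v_i=\tfrac12$ for all $i$, the contradiction is immediate and needs only $m\ge 3$, not the degree hypothesis: $v^T\allone=m/2\ge\tfrac32$ gives $\lfloor\tfrac12+v^T\allone\rfloor\ge 2$, while every coefficient term is at most $1$. That is the paper's argument, and its germ is already in your second paragraph (``$\le 1$ with equality only when $d_j=0$''); you only needed to push it through instead of switching to the unspecified ``counting argument against column sums.'' Your degree-based route can in fact be salvaged --- $(Mu)_i=0$ for all $i$ forces $u=0$, whence the right-hand side is $2-k\le 0$ --- but that is not the argument you wrote down, and as written the step fails.
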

\begin{proof}[Proof of \cref{claim:char_bottom}] \isclaim
Suppose first that $v \ne 0$. Then, for every $i \in [m]$, we have
$
  \left\lfloor \tfrac{1}{2} + v_i - (Mu)_i \right\rfloor \le 1 \le \left\lfloor \tfrac{1}{2} + v^T \allone \right\rfloor.
$
This holds with equality for all $i \in [m]$ simultaneously only if $v_i = \frac{1}{2}$ and $v^T \allone \le 1$, 
contradicting $m \ge 3$. Thus, no inequality of the form \cref{eq:cut_bottom} with $v \ne 0$ has identical coefficients that coincide with the right-hand side.
We may therefore assume that $v=0$.
 
If $2u^T \allone \le k-1$, inequality \cref{eq:cut_bottom} is redundant: It is the sum of the inequalities
$(2u^T \allone - (k-1)) \allone^T x \le 2u^T \allone - (k-1)$ and 
$\left\lfloor \tfrac{1}{2} - (Mu)_i \right\rfloor x_i \le 0$ for all $i \in [m]$, all of which are valid for $P$.
Assuming that $2 u^T \allone \ge k$, inequality \cref{eq:cut_bottom} is equivalent to $\allone^T x \le 1$
if and only if $(Mu)_i \le \frac{1}{2}$ for all $i \in [m]$.
\end{proof}

Putting together \cref{claim:iff,claim:char,claim:char_bottom}, we conclude that
$P_{\frac{1}{2}} = P_I$ if and only if there exists 
some $u \in \{0,\frac{1}{2}\}^n$ such that $2u$ is the incidence vector of a stable set in $G$ of size at least $k$.
\end{proof}


\section{Further hardness results} \label{sec:corollaries}

A careful analysis of the proof of \cref{thm:main}
shows that, if the polytopes $P$ constructed in the reduction satisfy $P_\frac{1}{2} = P_I$, 
there is a single $\{0,\frac{1}{2}\}$-cut that certifies this (see \cref{claim:iff}).
This observation immediately implies the following corollary.

\begin{corollary}
Let $k \in \N$ be a fixed constant.
Given $A \in \Z^{m \times n}$ and $b \in \Z^m$ with $P := \{ x \in \R^n \colon Ax \le b \} \subset [0,1]^n$,
deciding whether one can obtain $P_I$ by adding at most $k$ $\{0,\frac{1}{2}\}$-cuts is strongly NP-hard, 
even when $k=1$, and $-x \le 0$ and $x \le \allone$ are part of the system $Ax \le b$.
\end{corollary}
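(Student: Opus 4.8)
The plan is to derive this corollary directly from the reduction built in the proof of \cref{thm:main}, observing that the polytopes $P$ produced there have the property that $P_{\frac12}=P_I$ is certified by a \emph{single} $\{0,\frac12\}$-cut whenever it holds at all. Concretely, I would argue as follows.

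First, recall from the proof of \cref{thm:main} that the reduction maps an instance of \textsc{Stable Set} (with $G$ of minimum degree at least $2$, $k\ge 2$) to the system \cref{eq:P_1,eq:P_A,eq:P_bottom}, which defines a polytope $P\subset[0,1]^m$ whose description explicitly contains $-x\le 0$ and $x\le\allone$. By \cref{claim:integral} we have $P_I=\{x\in P\colon \allone^Tx=1\}$, so the single inequality $\allone^Tx\le 1$ already cuts $P$ down to $P_I$. By \cref{claim:iff}, $P_{\frac12}=P_I$ if and only if some $\{0,\frac12\}$-cut is equivalent to $\allone^Tx\le 1$; in that case adding that \emph{one} cut to $Ax\le b$ yields $P_I$. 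Conversely, if $P_{\frac12}\ne P_I$, then no collection of $\{0,\frac12\}$-cuts — in particular no collection of at most $k$ of them — can produce $P_I$, since $P_{\frac12}$ is by definition the intersection of $P$ with \emph{all} $\{0,\frac12\}$-cuts. Hence, for the instances arising in the reduction and for any fixed $k\ge 1$, the answer to ``can one obtain $P_I$ by adding at most $k$ $\{0,\frac12\}$-cuts?'' is \textbf{yes} exactly when $P_{\frac12}=P_I$, which by the concluding line of the proof of \cref{thm:main} holds exactly when $G$ has a stable set of size at least $k$.

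Since the reduction is polynomial-time (the system \cref{eq:P_1,eq:P_A,eq:P_bottom} has size polynomial in $|V|+|E|$ and all coefficients are polynomially bounded, so strong NP-hardness is preserved) and \textsc{Stable Set} is strongly NP-hard, this establishes strong NP-hardness of the ``at most $k$ cuts'' problem for every fixed $k\ge 1$, and in particular for $k=1$. The polytope $P$ lies in $[0,1]^m$ and its defining system contains $-x\le 0$ and $x\le\allone$, giving the claimed strengthening.

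The only point requiring care — and the one I would state explicitly rather than leave implicit — is the equivalence ``yes to the $k$-cuts question $\iff$ $P_{\frac12}=P_I$'' for the specific family of instances: the forward direction is immediate ($k$ cuts give a polyhedron containing $P_{\frac12}$, which equals $P_I$ by assumption, and it is contained in $P_I$ by validity), and the backward direction uses precisely that the single cut supplied by \cref{claim:iff} already suffices, so $1\le k$ cuts are enough. There is no genuine obstacle here; the corollary is a direct structural consequence of the proof, not an independent argument.
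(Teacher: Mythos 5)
Your argument is correct and matches the paper's own justification: the paper likewise notes that, for the instances of the reduction in the proof of \cref{thm:main}, integrality of $P_{\frac12}$ is certified by the single cut equivalent to $\allone^T x \le 1$ (via \cref{claim:iff}), while in negative instances not even all cuts suffice, so the same reduction settles the ``at most $k$ cuts'' question for every fixed $k\ge 1$. The only cosmetic issue is the reuse of the symbol $k$ both for the stable set size in the reduction and for the fixed number of cuts in the corollary; your reasoning keeps them logically separate, so nothing breaks.
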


Moreover, let us remark that $P'=P_I$ for the polytopes $P$ arising from the reduction. 
This follows from the fact that for $n \ge 3$,
the inequality $\allone^T x \le \left\lfloor 2n / 2(n-1) \right\rfloor = 1$ is a Gomory-Chv\'atal cut for $P$, see the proof of \cref{claim:integral}.

\begin{corollary}
Given $A \in \Z^{m \times n}$ and $b \in \Z^m$ with $P := \{ x \in \R^n \colon Ax \le b \} \subset [0,1]^n$,
deciding whether $P_{\frac{1}{2}}(A,b) = P'$ is strongly NP-hard, 
even when $-x \le 0$ and $x \le \allone$ are part of the system $Ax \le b$.
\end{corollary}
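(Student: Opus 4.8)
The plan is to reuse, essentially verbatim, the reduction from \textsc{Stable Set} constructed in the proof of \cref{thm:main}. Given a graph $G = (V,E)$ of minimum degree at least $2$ together with an integer $k \ge 2$ --- which we may assume after the preprocessing described there, and which ensures $n := |V| \ge 3$ and $m := |E| \ge 3$ --- we build the same polytope $P \subset \R^m$ defined by \cref{eq:P_1,eq:P_A,eq:P_bottom}; it satisfies $P \subset [0,1]^m$ and the box constraints $0 \le x \le \allone$ are already present in the system. The one new ingredient is the identity $P' = P_I$ for these instances. Granting this, the general chain of inclusions $P_I \subset P' \subset P_\frac{1}{2}(A,b)$ immediately gives that $P_\frac{1}{2}(A,b) = P'$ if and only if $P_\frac{1}{2}(A,b) = P_I$; and the latter, by the concluding line of the proof of \cref{thm:main}, holds exactly when $G$ has a stable set of size at least $k$. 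Hence the very same map is a polynomial-time reduction from \textsc{Stable Set}, and strong NP-hardness follows as in \cref{thm:main}.

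It remains to argue $P' = P_I$. As noted in the proof of \cref{claim:integral}, summing the inequalities in \cref{eq:P_A} yields $2(n-1)\allone^T x \le 2n$; the underlying multiplier vector $\tfrac{1}{2(n-1)}\allone$ on the rows of \cref{eq:P_A} produces the integral left-hand side $\allone^T x$, because each column of $A = 2\allone\allone^T - M^T$ sums to $2n - 2$ (every column of the edge--node incidence matrix $M$ has exactly two ones). Since $n \ge 3$ we have $\bigl\lfloor \tfrac{n}{n-1} \bigr\rfloor = 1$, so $\allone^T x \le 1$ is a Gomory-Chv\'atal cut for $P$ and therefore $P' \subset \{ x \in P \colon \allone^T x \le 1 \}$. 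On the other hand, \cref{eq:P_bottom} forces $\allone^T x \ge 1$ on all of $P$ (as $2k - 3 \ge 1 > 0$ because $k \ge 2$), so $\{ x \in P \colon \allone^T x \le 1 \} = \{ x \in P \colon \allone^T x = 1 \}$, which equals $P_I$ by \cref{claim:integral}. Together with the trivial inclusion $P_I \subset P'$ this gives $P' = P_I$.

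I do not expect a real obstacle here: once one observes that the polytopes produced by the \cref{thm:main} reduction already satisfy $P' = P_I$, the corollary is immediate. The only point that needs a little care is the verification that $\allone^T x \le 1$ is genuinely a Gomory-Chv\'atal cut, i.e., that the chosen multipliers turn the rows of \cref{eq:P_A} into an integral combination; this relies on the combinatorial structure of the incidence matrix $M$ rather than on any property of $G$ beyond being a simple graph, the minimum-degree-$2$ assumption entering only to guarantee $n, m \ge 3$. Everything else is a routine walk along the inclusions $P_I \subset P' \subset P_\frac{1}{2}(A,b) \subset P$.
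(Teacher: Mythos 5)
Your proposal is correct and follows the paper's argument exactly: the paper likewise observes that $\allone^T x \le \lfloor 2n/2(n-1)\rfloor = 1$ is a Gomory--Chv\'atal cut for the polytopes from the reduction in \cref{thm:main}, so that $P'=P_I$ and the question $P_{\frac{1}{2}}(A,b)=P'$ reduces to $P_{\frac{1}{2}}(A,b)=P_I$. Your additional verification that the multipliers $\tfrac{1}{2(n-1)}$ yield an integral left-hand side, and that \cref{eq:P_bottom} forces $\allone^T x \ge 1$, just makes explicit what the paper leaves as a remark.
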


The linear systems arising from our reduction have another interesting property.
The inequality description \cref{eq:P_1,eq:P_A,eq:P_bottom,eq:cut,eq:cut_bottom} of $P_{\frac{1}{2}}$ in the proof
of \cref{thm:main} is a TDI system if and only if $P_{\frac{1}{2}} = P_I$. 
This can be seen as follows.
Since any polyhedron defined by a TDI system with integer right-hand sides is integral \cite{EG}, it suffices to show the ``if'' part.
Suppose that $P_{\frac{1}{2}} = P_I$. By the proof of \cref{thm:main}, there exist vectors $u',u'' \in \{0,\frac{1}{2}\}^n$
such that $2Mu' \le \allone$, $2Mu'' \le \allone$, $2(u')^T \allone = k$, and $2(u'')^T \allone = k-2 \ge 0$ (see \cref{claim:char_bottom}).
The cuts of type \cref{eq:cut_bottom} derived with $u'$ and $u''$ (where we take $v=0$) 
are the inequalities $\allone^T x \le 1$ and $-\allone^T x \le -1$, respectively. 
The system defined by these two inequalities and $x \ge 0$ 
is a subsystem of \cref{eq:P_1,eq:P_A,eq:P_bottom,eq:cut,eq:cut_bottom} that is sufficient to describe $P_\frac{1}{2}$ 
(see \cref{claim:integral}) and that is readily seen to be TDI:
Let $c \in \Z^m$. We can assume w.l.o.g.\ that $c_1$ is the largest coefficient of $c$. 
It follows that $\max \{ c^T x \colon x \in P_\frac{1}{2} \} = c_1$.
It suffices to show that the inequality $c^T x \le c_1$ is a nonnegative integer linear combination of the selected subsystem.
Indeed, it is the sum of $c_1 \allone^T x \le c_1$ 
(which is a nonnegative integer multiple of $\allone^T x \le 1$ or $-\allone^T x \le -1$)
and $-(c_1-c_i) x_i \le 0$ for all $i \in [m]$.
The above argument shows the following result.

\begin{corollary}
Let $A \in \Z^{m \times n}$ and $b \in \Z^m$. 
Deciding whether the system given by $Ax \le b$ and all $\{0,\frac{1}{2}\}$-cuts derived from it 
is TDI, is strongly NP-hard, 
even when $-x \le 0$ and $x \le \allone$ are part of the system $Ax \le b$.
\end{corollary}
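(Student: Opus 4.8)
The plan is to reduce from \textsc{Stable Set} (restricted to graphs of minimum degree at least $2$, as in the proof of \cref{thm:main}), reusing verbatim the polytope $P$ and the system \cref{eq:P_1,eq:P_A,eq:P_bottom} together with the observation that the $\{0,\frac{1}{2}\}$-cuts derivable from it are exactly those listed in \cref{eq:cut,eq:cut_bottom}. The key point is that the system consisting of $Ax \le b$ (i.e.\ \cref{eq:P_1,eq:P_A,eq:P_bottom}) together with \emph{all} $\{0,\frac{1}{2}\}$-cuts derived from it is precisely the system \cref{eq:P_1,eq:P_A,eq:P_bottom,eq:cut,eq:cut_bottom} analysed in the proof of \cref{thm:main}, since adding redundant inequalities (such as all cuts of type \cref{eq:cut,eq:cut_bottom} simultaneously) does not change the argument. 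So it suffices to show: this system is TDI if and only if $P_{\frac{1}{2}} = P_I$, which is equivalent by \cref{thm:main} to $G$ having a stable set of size at least $k$.

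First I would establish the easy direction: if the system is TDI, then since all right-hand sides are integral, the polyhedron it defines, namely $P_{\frac{1}{2}}$, is integral by the theorem of Edmonds and Giles \cite{EG}; hence $P_{\frac{1}{2}} = (P_{\frac{1}{2}})_I = P_I$ (using $P_I \subseteq P_{\frac{1}{2}} \subseteq P$). Second, for the converse, suppose $P_{\frac{1}{2}} = P_I$. By \cref{thm:main} and \cref{claim:char_bottom}, there is $u' \in \{0,\frac{1}{2}\}^n$ with $2Mu' \le \allone$ and $2(u')^T\allone = k$; the corresponding cut of type \cref{eq:cut_bottom} with $v=0$ is exactly $\allone^T x \le 1$. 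Since $k \ge 2$ we also have, taking a subset of an independent set, a vector $u''$ with $2Mu'' \le \allone$ and $2(u'')^T\allone = k-2 \ge 0$, whose cut of type \cref{eq:cut_bottom} is $-\allone^T x \le -1$. Thus the subsystem $\{\, x \ge 0,\ \allone^T x \le 1,\ -\allone^T x \le -1 \,\}$ is among our inequalities and already describes $P_{\frac{1}{2}} = P_I = \{x \in [0,1]^m : \allone^T x = 1\}$ by \cref{claim:integral}.

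Third, I would show this subsystem is TDI by an explicit dual argument: for any $c \in \Z^m$, assume w.l.o.g.\ $c_1 = \max_i c_i$, so $\max\{c^T x : x \in P_{\frac{1}{2}}\} = c_1$, and write $c^T x \le c_1$ as the sum of $c_1 \cdot (\allone^T x \le 1)$ — a nonnegative integer multiple of $\allone^T x \le 1$ if $c_1 \ge 0$, or of $-\allone^T x \le -1$ if $c_1 \le 0$ — and the inequalities $-(c_1 - c_i)\, x_i \le 0$ for $i \in [m]$, which are nonnegative integer multiples of $-x_i \le 0$. Finally, I would invoke the standard fact that a system remains TDI after adding inequalities that are nonnegative integer combinations of members of a TDI subsystem which still suffices to describe the polyhedron (equivalently, adding valid inequalities whose tight faces are already faces realized optimally by the subsystem); concretely, for every $c$ the dual optimum attained on the subsystem extends by zeros to an integral dual optimum for the full system. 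This yields that \cref{eq:P_1,eq:P_A,eq:P_bottom,eq:cut,eq:cut_bottom} is TDI, completing the reduction.

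The main obstacle is the last step: one must be careful that enlarging a TDI subsystem to the full list of cuts preserves total dual integrality. This is not automatic for arbitrary supersystems, but it holds here because every inequality $a^Tx \le \alpha$ in the full system is valid for $P_{\frac{1}{2}}$, so for each integral objective $c$ an integral optimal dual solution supported on the subsystem (which exists by the explicit construction above) is feasible — and optimal — for the dual of the full LP by padding with zeros; hence the full system is TDI. I would state this bookkeeping carefully, since it is the only place where a naive argument could fail.
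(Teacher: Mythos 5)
Your proposal is correct and follows essentially the same route as the paper: Edmonds--Giles for the ``TDI $\Rightarrow$ integral'' direction, the two cuts $\allone^T x \le 1$ and $-\allone^T x \le -1$ obtained from $u'$ and $u''$ via \cref{claim:char_bottom}, and the explicit integral dual decomposition $c^T x \le c_1$ as $c_1\allone^T x \le c_1$ plus $-(c_1-c_i)x_i \le 0$. Your final bookkeeping step --- padding the integral dual optimum of the TDI subsystem with zeros to handle the full list of cuts --- is exactly the (standard) fact the paper uses implicitly, so you have only made explicit what the paper leaves to the reader.
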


Further note that the presence of the constraints $x \le \allone$ in \cref{eq:P_1} is not essential 
for our reduction in the proof of \cref{thm:main}. In fact, the upper bounds are redundant:
For every $i \in [m]$, consider a row of $A$ such that the entry in column $i$ is equal to 2. Such a row exists because $n \ge 3$. The corresponding inequality in \cref{eq:P_A} together with the nonnegativity constraints $-x_j \le 0$ (possibly twice) for all $j \ne i$ yields $2x_i \le 2$ for all $x \in P$.
As the only relevant cuts among \cref{eq:cut,eq:cut_bottom} are those with $v=0$, we conclude that all of the above results still hold true when the upper bounds $x \le \allone$ are not part of the input.

Another byproduct of our proof of \cref{thm:main} is that the membership problem for the $\{0,\frac{1}{2}\}$-closure
of polytopes in the 0/1 cube is strongly coNP-complete. This has already been shown by Letchford, Pokutta and Schulz \cite{LPS}.
However, neither of the two different reductions given in \cite{LPS} constructs linear systems that include both nonnegativity 
constraints and upper bounds on every variable. When these constraints are required to be part of the input, membership testing
remains strongly coNP-complete, as the following result shows.
\begin{corollary}
The membership problem for the $\{0,\frac{1}{2}\}$-closure of polytopes contained in the 0/1 cube
is strongly coNP-complete, even when the inequalities $-x \le 0$ and $x \le \allone$ are part of the input.
\end{corollary}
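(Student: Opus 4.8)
The plan is to establish both membership in coNP and coNP-hardness, with the latter being the substantive part since it follows from adapting the reduction in the proof of \cref{thm:main}. For membership in coNP, recall that Caprara and Fischetti \cite{CF} showed the membership problem for the $\{0,\frac{1}{2}\}$-closure is strongly coNP-complete; a ``no'' certificate is simply a choice of multipliers $u \in \{0,\frac{1}{2}\}^m$ with $u^T A \in \Z^n$ such that the induced $\{0,\frac{1}{2}\}$-cut $u^T A x \le \lfloor u^T b \rfloor$ is violated by $\hat{x}$. This certificate has polynomial size and is checkable in polynomial time, and the presence of the box constraints $-x \le 0$ and $x \le \allone$ does not affect this. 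So the whole content is the hardness direction.

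For coNP-hardness, I would reduce from the complement of \textsc{Stable Set}, reusing essentially verbatim the polytope $P \subset \R^m$ built in \cref{sec:proof} from a graph $G=(V,E)$ of minimum degree at least $2$ and an integer $k \ge 2$, described by the system \cref{eq:P_1,eq:P_A,eq:P_bottom}. This $P$ already lies in the $0/1$ cube and already contains both $-x \le 0$ and $x \le \allone$ as explicit constraints. The key point, extracted from the proof of \cref{claim:iff}, is that the point $\hat{x} := y = (\tfrac{1}{n} + \varepsilon)\allone$ for the specified $\varepsilon = \tfrac{1}{m^2(m+n)}$ lies in $P$, violates $\allone^T x \le 1$, and satisfies \emph{every} $\{0,\frac{1}{2}\}$-cut derivable from the system \emph{except} those equivalent to $\allone^T x \le 1$. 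Combining this with \cref{claim:char,claim:char_bottom}, a $\{0,\frac{1}{2}\}$-cut equivalent to $\allone^T x \le 1$ exists if and only if there is $u \in \{0,\frac{1}{2}\}^n$ with $2u$ the incidence vector of a stable set of size at least $k$. Hence $\hat{x} \in P_{\frac{1}{2}}(A,b)$ if and only if $G$ has \emph{no} stable set of size at least $k$. Since \textsc{Stable Set} is strongly NP-hard even under the minimum-degree-$2$ restriction, and all numbers in the construction are polynomially bounded in the size of $G$ and $k$, this is a valid strong reduction. One should also note $\hat{x} \in P$ must be verified as part of the instance, which is immediate from the claim.

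The main thing to be careful about — rather than a genuine obstacle — is that the membership problem as stated takes $\hat{x} \in \Q^n$ \emph{promised} to lie in $P$; I must make sure $\hat{x} = y$ indeed satisfies \cref{eq:P_1,eq:P_A,eq:P_bottom} (which is exactly what the proof of \cref{claim:iff} checks: $y \ge \tfrac{1}{m}\allone$ handles the $\alpha \le 0$ cases including \cref{eq:P_bottom} after rearrangement, and the estimate $a^T y \le \alpha - \tfrac{1}{m} + \varepsilon(m\alpha - 1)$ handles \cref{eq:P_A} and the box upper bounds). A second minor point is to double-check that the encoding length of $\varepsilon$ is polynomial, which it is since $m, n$ are the graph parameters. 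Everything else is a direct citation of the claims already proved in \cref{sec:proof}, so the write-up is short: state the reduction, invoke \cref{claim:iff,claim:char,claim:char_bottom} for the equivalence, and conclude.
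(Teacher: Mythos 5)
Your proposal is correct and follows essentially the same route as the paper: the same reduction from \textsc{Stable Set}, with the point $y = (\tfrac{1}{n}+\varepsilon)\allone$ from the proof of \cref{claim:iff} serving as the query point, so that $y \in P_{\frac{1}{2}}$ if and only if $G$ has no stable set of size at least $k$, and $\varepsilon = \tfrac{1}{m^2(m+n)}$ keeping the encoding length polynomial. The paper's own proof is just a terser version of exactly this argument.
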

\begin{proof}
The problem clearly belongs to coNP.
To show hardness, we use the same reduction from \textsc{Stable Set} as in the proof of \cref{thm:main}.
The vector $y$ defined in the proof of \cref{claim:iff} satisfies $y \notin P_{\frac{1}{2}}$ 
if and only if the instance of \textsc{Stable Set} is a ``yes'' instance.
The encoding length of $y$ is polynomial in $m$ and $n$ if we choose $\varepsilon$ as in \cref{claim:iff}.
\end{proof}


\section{Concluding remarks}

It is worth pointing out that the problem of recognizing integrality of the $\{0,\frac{1}{2}\}$-closure is in coNP
when the membership problem for the $\{0,\frac{1}{2}\}$-closure can be solved in polynomial time:
If $P = \{ x \colon Ax \le b \}$ is a rational polyhedron with $P_{\frac{1}{2}}(A,b) \ne P_I$, 
it suffices to exhibit a fractional vertex $\hat{x}$ of $P_{\frac{1}{2}}(A,b)$ along with a corresponding basis. 
Then one can verify in polynomial time that $\hat{x} \in P_{\frac{1}{2}}(A,b)$ and that $\hat{x}$ is indeed a vertex.
This observation can be found in \cite[Chapter 9]{GLS} where it is stated
in the context of recognizing $t$-perfect graphs. These are the graphs
whose stable set polytope is determined by nonnegativity and edge constraints together with the 
odd-cycle inequalities \cite{Chv1}. In fact, the odd-cycle inequalities can be derived as $\{0,\frac{1}{2}\}$-cuts 
from the other two classes of inequalities \cite{GS}. This means that a graph is $t$-perfect if and only if the 
$\{0,\frac{1}{2}\}$-closure of the relaxation of its stable set polytope given by nonnegativity and edge constraints is integral.
Since a separating odd-cycle inequality can be found in polynomial time \cite{GS},
recognizing $t$-perfection is in coNP. Whether this problem is in NP or in P is not known (see \cite[Chapter 9]{GLS}).
However, some classes of $t$-perfect graphs are known to be polynomial-time recognizable, 
including claw-free $t$-perfect graphs \cite{BS} and bad-$K_4$-free graphs \cite{GSh}. Interestingly, for these two classes of graphs, the linear system in \cite{Chv1} that determines the stable set polytope is TDI \cite{BSt,Sch1}.
It is not known whether this holds true for $t$-perfect graphs in general (see \cite{Sch}).


\end{document}